\documentclass[sn-mathphys-num]{sn-jnl}

\usepackage{graphicx}
\usepackage{multirow}
\usepackage{amsmath,amssymb,amsfonts}
\usepackage{amsthm}
\usepackage{mathrsfs}
\usepackage[title]{appendix}
\usepackage{xcolor}
\usepackage{textcomp}
\usepackage{manyfoot}
\usepackage{booktabs}
\usepackage{algorithm}
\usepackage{algorithmicx}
\usepackage{algpseudocode}
\usepackage{listings}

\usepackage{scalerel}
\usepackage{stackengine}
\usepackage{tikz}
\usepackage{enumitem}
\usepackage{stmaryrd}  
\usepackage{xparse}

\usetikzlibrary{
    arrows,
    arrows.meta,
    backgrounds,
    calc,
    decorations,
    decorations.markings,
    decorations.pathreplacing,
    decorations.text,
    fadings,
    fit,
    intersections,
    matrix,
    patterns,
    positioning,
    shadows.blur,
    shapes,
    shapes.geometric,
    trees,
}

\makeatletter
\def\thm@space@setup{
  \thm@preskip=3pt plus 2pt minus 1pt
  \thm@postskip=\thm@preskip
}
\makeatother

\theoremstyle{thmstyleone}

\newtheorem{lemma}{Lemma}[section]
\newtheorem{proposition}[lemma]{Proposition}

\newtheorem{theorem}[lemma]{Theorem}

\theoremstyle{thmstyletwo}

\theoremstyle{thmstylethree}

\theoremstyle{definition}
\newtheorem{definition}[lemma]{Definition}
\newtheorem{remark}[lemma]{Remark}

\NewDocumentEnvironment{axioms}{O{(Ax-\expandafter\arabic*)}}{\enumerate[label=#1, labelindent=3em, leftmargin=6em, itemindent=2em, labelsep=2em]}{\endenumerate}

\newcommand\ee[1]{\enspace #1 \enspace}
\newcommand\ete[1]{\enspace\text{#1}\enspace}
\newcommand\qtq[1]{\quad\text{#1}\quad}

\newcommand\p[1]{\mathcal{#1}}
\newcommand\ol[1]{\overline{#1}}
\newcommand\sue{\subseteq}
\newcommand{\id}{\ensuremath{\text{id}}}

\newcommand\theStar[1]{\textbf{($\star #1$)}}
\newcommand\addStar[1]{\hfill\theStar{#1}\quad\qquad~}

\newcommand\twist{^{\twistSym}}
\newcommand\twistSym{{\bowtie}}

\newcommand\oimply{\supset}
\newcommand\oimplied{\subset}
\newcommand\simply{\rightarrowtriangle}

\newcommand{\dtt}{\ensuremath{\text{\emph{t}}\mkern-3mu\text{\emph{t}}}}
\newcommand{\dff}{\ensuremath{\text{\emph{f}}\mkern-3mu\text{\emph{f}}}}
\newcommand{\mtt}{\ensuremath{\mathsf{t}}}
\newcommand{\mff}{\ensuremath{\mathsf{f}}}
\newcommand{\FOUR}{\ensuremath{\mathcal{FOUR}}}
\newcommand{\two}{\ensuremath{\mathbf{2}}}

\newcommand{\tot}{\ensuremath{\mathsf{tot}}}
\newcommand{\con}{\ensuremath{\mathsf{con}}}

\newcommand{\ileq}{\sqsubseteq}
\def\ivee{\sqcup}
\def\imee{\sqcap}
\def\ibigvee{\bigsqcup}
\def\idirvee{\dirsqcup}

\DeclareMathOperator*{\dirsqcup}{\bigsqcup\!{}^{\uparrow}}
\newcommand\dirsue{\mathrel{\subseteq\!{}^{\uparrow}}}

\newcommand{\thickdot}{\scaleobj{1.3}{\cdot}}
\newcommand{\lleq}{\mathrel{\newcombinedsymbol{0.8}{\leq}{\ \,\thickdot}}}

\newcommand{\lvee}{\newsmalldotoperator{1.3}{\vee}}
\newcommand{\lmee}{\newsmalldotoperator{-1.8}{\wedge}}

\newcommand{\newsmalldotoperator}[2]{\mathbin{\newcombinedsymbol{#1}{#2}{\thickdot}}}

\newcommand{\newcombinedsymbol}[3]{\ensuremath{\ThisStyle{
    \ensurestackMath{\stackinset{c}{0\LMpt}{c}{#1\LMpt}{\SavedStyle#3}{\SavedStyle#2}}}}}

\newcommand\lpar{\mathbin{\rotatebox[origin=c]{180}{$\&$}}}
\newcommand\lolli{\multimap}

\newcommand\Chu{\mathtt{Chu}}

\raggedbottom

\begin{document}

\title{Four imprints of Belnap's useful four-valued logic in computer science}

\author*[1]{\fnm{Tom\'a\v s} \sur{Jakl}}\email{tomas.jakl@cvut.cz}

\affil*[1]{\orgdiv{Faculty of Information Technology}, \orgname{Czech Technical University},
    \orgaddress{
        
        \city{Prague},
        
        \country{Czech Republic}}
    }

\abstract{
    We review four areas of theoretical computer science which share technical or philosophical ideas with the work of Belnap on his useful four-valued logic. Perhaps surprisingly, the inspiration by Belnap--Dunn logic is acknowledged only in the study of d-frames. The connections of Belnap's work and linear logic, Blame Calculus or the study of LVars are not openly admitted.

    The key to three of these connections with Belnap's work go via the twist-product representation of bilattices.
    On the one hand, it allows us to view a large class of models of linear logic as based on Belnap--Dunn logic.
    On the other hand, d-frames admit two twist-product representation theorems and, also, the key theorem in Blame Calculus is essentially a twist-product representation theorem too, albeit with a strong proof-theoretic flavour.
}

\keywords{Belnap--Dunn logic, twist-construction, twist-product representation, d-frames, Blame Calculus, LVars}

\maketitle

\hfill{\emph{In memory of Nuel Belnap.}}

\section{Introduction}
\label{sec1}

Belnap's four-valued logic~\cite{belnap76,belnap77}, following Dunn~\cite{Dunn1976}, is an extension of the classical two-valued logic to encompass phenomena arising in computer-based reasoning systems. On top of the classical two values `True' and `False' it adds `None' representing the lack of any information and `Both' representing the inconsistency of claiming both `True' and `False'.
One of the key ideas in Belnap's approach is to distinguish between two orders of the four-valued lattice,
the usual `logical order'~\(\lleq\) and the `information/knowledge order'~\(\ileq\).
The latter order is inspired by domain theory, initiated by Scott~\cite{scott1970outline}. Domains were introduced to give semantics to (abstract) programming languages such as the lambda calculus.

The tradition of using ``Four Logical Alternatives,'' known as catu\d sko\d ti, goes back to the 6th century Indian logic.
These four alternatives precisely correspond to the aforementioned four values~\cite{robinson57indian,jayatilleke67indian,Garfield2019}.
However, undoubtedly, Belnap was the first to identify the two orders and to emphasise the usefulness of the four values for computer reasoning.

\begin{figure}[t]
    \begin{center}
\begin{tikzpicture}[node distance=1.5cm, font=\tiny]
    \coordinate (T) at (0,0);
    \coordinate (L) at (-1,-1);
    \coordinate (R) at ( 1,-1);
    \coordinate (B) at ( 0,-2);

    \path (T) edge (L) edge (R)
    (R) edge (B)
    (B) edge (L);

    \node at (0, 0.2) {Both};
    \node at (0,-2.2) {None};
    \node at (-1.5,-1) {False};
    \node at ( 1.5,-1) {True};

    \path[->,every node/.style={sloped,anchor=south,auto=false,font=\footnotesize, pos=0.6}]
        ($(L) + (-1.5,-1.7)$) edge node [below] {logical order $\lleq$} ($(R) + (+1, -1.7)$)
        ($(B) + (-2.0,-1.0)$) edge node {information order $\ileq$} ($(T) + (-2.0, 0.5)$);
\end{tikzpicture}
    \end{center}
    \caption{Belnap's four-valued lattice \FOUR.}
    \label{fig:four}
\end{figure}
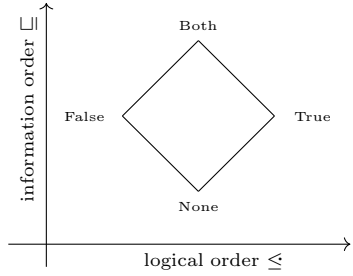

Later on Ginsberg~\cite{ginsberg1988multivalued} introduced bilattices as an algebraic description of the connectives introduced by Belnap.
As his primary examples he used world-based bilattices, that is, structures consisting of pairs \((A_+, A_-)\) of subsets of a chosen set of worlds.
Then, intuitively, \(A_+\) is a set of worlds where a given property holds and \(A_-\) are the worlds where it fails.
Under this interpretation four-valued connectives are calculated as simple set-theoretic operations. For example, the negation and logical conjunction are give by \(\neg(A_+, A_-) = (A_-, A_+)\) and \((A_+, A_-) \lmee (B_+, B_-) = (A_+ \cap B_+, A_- \cup A_-)\), respectively.

Then Fitting~\cite{fitting91bilattices} observed that the construction of Ginsberg holds more generally. That is, not only that the square products of a (powerset) lattice yields a bilattice but, in fact, this holds true for any lattice.
It is a remarkable result of Rivieccio, from his Ph.D.\ thesis~\cite{rivieccio10}, that every bilattice is obtained this way.
Philosophically speaking, this means that, even in this more general algebraic approach of Ginsberg~\cite{ginsberg1988multivalued}, every bilattice-valued proposition can be decomposed into two parts, its positive and negative parts.

In this paper we discuss three different examples from theoretical computer science where this tension between positive and negative parts is crucial.
Firstly, in Section~\ref{s:linear-logic}, we observe that the famous Chu construction of models of linear logic is exactly the twist-construction of bilattices, with most connectives being our well-known four-valued connectives of Belnap and Ginsberg.
Secondly, in Section~\ref{s:dfrm}, we discuss the author's own work on d-frames, which came from considerations in denotation semantics of programming languages and domain theory. On the one hand, d-frames provide a relaxation which encodes properties of topological models of Belnap--Dunn logic. Our approach reopens some of Belnap's original questions that have to do with computability in the information order.
After d-frames we turn to Blame Calculus (Section~\ref{s:blame-calc}). This is a type system for programming languages that allows to construct types from other types with set comprehension.

The attentive reader might notice that the title of this paper speaks of four imprints of Belnap's work, and not merely three. Indeed, the final section (Section~\ref{s:lvars}) is about a very practical application of ideas similar to Belnap's, in an implementation of special type of variables (called LVars) for programming purposes~\cite{KuperN13icfp,kuper2015thesis}. LVars were introduced to solve a particular problem in the theory distributed and parallel programs. Perhaps surprisingly, the core of the solution lies in adding the information order as part of the structure, which is very much in the spirit of Belnap who added the information order (and extra new values and connectives) to the usual two-valued logical lattice.

\section{Review of bilattices}

The algebraic properties of \FOUR{} (from Figure~\ref{fig:four}) are encapsulated by structures called \emph{bilattices}. These are the tuples \((B, \imee, \ivee, \lmee, \lvee, \neg, \mff, \mtt, \bot, \top)\) where
\begin{enumerate}[label=(BI-\expandafter\arabic*), labelindent=*, leftmargin=5em, labelsep=1em ]
    \item \((B, \imee, \ivee, \bot, \top)\) and \((B, \lmee, \lvee, \mff, \mtt)\) are (bounded) lattices\footnote{Throughout this text we always implicitly assume that our lattices are bounded.},
    \item \(x \ileq y\) implies \(\neg x \ileq \neg y\),
    \item \(x \lleq y\) implies \(\neg y \lleq \neg x\), and
    \item \(a = \neg \neg a\).
\end{enumerate}
We denote by \(\ileq\) the order induced by the lattice operations \(\imee,\ivee\) and, similarly, we denote \(\lleq\) the order induced by \(\lmee, \lvee\). Note that the constants \(\mff, \mtt, \bot, \top\) correspond to the four values `False', `True', `None', and `Both', respectively.

Furthermore, to not get lost in the intricate details concerning weak versions of bilattices, we always assume that our bilattices are \emph{distributive}, i.e.\ that any combination of pairs of operations from \(\imee, \ivee, \lmee, \lvee\) distributes over each other. In fact, the variety of (distributive) bilattices is generated by \FOUR{}, i.e.\ the bilattice on the set~\(\{\mff, \mtt, \bot, \top\}\), with \(\imee, \ivee, \lmee, \lvee\) as in Figure~\ref{fig:four} and with negation given by \(\neg \top = \top\), \(\neg \bot = \bot\) and \(\neg \mtt = \mff\) \cite[Theorem~2.4.1]{rivieccio10}.

\paragraph{Twist-construction}

In the following lines we describe a simple construction, that goes back to Kalman~\cite{kalman1958lattices} and which found its place not only in the theory of bilattices but also e.g.\ in the algebraic approach to relevance logic~\cite{tsinakis2006,fussner2019mingles}, albeit in a slightly different form.

Let~\(L_+\) and~\(L_-\) be distributive lattices, the \emph{twist-structure} \(L_+ \bowtie L_-\)is the algebra \((L_+\times L_-,\, \imee, \ivee, \lmee, \lvee, \mff, \mtt, \bot, \top)\) where
\begin{align*}
    (a_+, a_-) \imee (b_+, b_-) &= (a_+ \land b_+, a_- \land b_-), \\
    (a_+, a_-) \ivee (b_+, b_-) &= (a_+ \lor b_+, a_- \lor b_-), \\
    (a_+, a_-) \lmee (b_+, b_-) &= (a_+ \land b_+, a_- \lor b_-), \\
    (a_+, a_-) \lvee (b_+, b_-) &= (a_+ \lor b_+, a_- \land b_-), \\
    \mff = (0,1), \quad \mtt = (1,0),& \quad \bot = (0,0), \ete{and} \top = (1,1).
\end{align*}
The induced orders \(\ileq\) and \(\lleq\) are then given as follows.
\begin{align*}
    (a_+, a_-) \ileq (b_+, b_-) &\iff a_+ \leq b_+ \ete{and} a_- \leq b_- \\
    (a_+, a_-) \lleq (b_+, b_-) &\iff a_+ \leq b_+ \ete{and} b_- \leq a_-
\end{align*}

In case we use the same lattice twice, instead of writing \(L \bowtie L\) we simply write~\(L\twist\). Then, \(L\twist\) can be extended to a bilattice by interpreting \emph{negation} as
\[
    \neg(a_+, a_-) = (a_-, a_+).
\]

If, moreover, \(L\) is a Heyting algebra, we can view \(L\twist\) as an \emph{implicative} bilattice. This means that its signature is extended with \emph{weak implication} defined as follows.
\[
    (a_+, a_-) \oimply (b_+, b_-) = (a_+ \to b_+,\, a_+ \land b_-)
\]
Furthermore, the so-called \emph{strong implication} is now also term definable by
\begin{align}
    x \simply y = (x \oimply y) \lmee (\neg y \oimply \neg x)
    \label{eq:simply}
\end{align}
The reason for the distinction between these two implications is of practical nature. Weak implication supports the usual deduction theorem \cite{arieli96} whereas strong implication is residuated. That is, for the \emph{fusion} operation
\[
    x * y = \neg (y \simply \neg x)
\]
we have that \(x * y \lleq z\) iff \(x \lleq y \simply z\)~\cite[Proposition~5.4.1]{rivieccio10}.

It is easy to see that \FOUR{} is precisely \(\two\twist\), where \(\two\) is the two element Boolean algebra.
Furthermore, Rivieccio~\cite{rivieccio10} showed that that all `typical' bilattices are obtained this way, which extends the representation theorem of Avron~\cite{avron1996structure} for pre-bilattices.

\begin{lemma}
    \label{l:twist-bilat}
    For a bilattice \(B\), the following are equivalent.
    \begin{enumerate}
        \item \(B\) is \emph{interlaced}, that is:
            \begin{align*}
                x \lleq y &\ee{\implies} x \imee z \lleq y \imee z \ete{and} x \ivee z \lleq y \ivee z
                \\
                x \ileq y &\ee\implies x \lmee z \ileq y \lmee z \ete{and} x \lvee z \ileq y \lvee z
            \end{align*}
        \item \(B \cong L\twist\) for some bounded distributive lattice \(L\).
    \end{enumerate}
\end{lemma}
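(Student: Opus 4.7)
The direction $(2) \Rightarrow (1)$ is a routine componentwise check. In $L\twist$ the four operations act coordinatewise on $L \times L$ (with the second coordinate ordered in reverse for $\lmee, \lvee$), so monotonicity of meet and join in $L$ immediately yields both interlacing conditions: for instance, if $(a_+,a_-) \ileq (b_+,b_-)$ then $a_\pm \le b_\pm$ in $L$, whence $(a_+ \land c_+, a_- \lor c_-) \ileq (b_+ \land c_+, b_- \lor c_-)$, which is exactly $x \lmee z \ileq y \lmee z$. The other three implications are analogous.

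For $(1) \Rightarrow (2)$, the plan is to identify $L$ inside $B$ as a ``positive cone'' and to split every element of $B$ into an ordered pair of positive parts. I set
\[
    L \;:=\; \{a \in B : a \ileq \mtt\}.
\]
A first technical step, relying essentially on interlacing, shows that $L$ is closed under all four binary operations of $B$, that on $L$ the operations $\imee$ and $\lmee$ coincide (and likewise $\ivee$ and $\lvee$), and hence that $L$ is a bounded distributive lattice with bottom $\bot$ and top $\mtt$. The candidate isomorphism is
\[
    \Phi \colon B \to L\twist, \qquad \Phi(x) \;=\; (x \imee \mtt,\; \neg x \imee \mtt),
\]
with inverse $\Psi(a,b) = a \ivee \neg b$. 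That $\Phi$ lands in $L \times L$ is immediate from $\mtt \imee \mtt = \mtt$ together with the monotonicity of $\imee$ in $\ileq$.

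The crux of the proof is the decomposition identity
\[
    x \;=\; (x \imee \mtt) \ivee (x \imee \mff),
\]
from which $\Psi \circ \Phi = \id$ follows using $\neg \mtt = \mff$ (a consequence of (BI-3) and (BI-4)) and the fact that $\neg$ is an $\ileq$-homomorphism, so $\neg(\neg x \imee \mtt) = x \imee \mff$. The converse composition $\Phi \circ \Psi = \id$ reduces to the observations that $a \imee \mtt = a$ for $a \in L$ and that $b \in L$ forces $\neg b \ileq \mff$, whence $\neg b \imee \mtt \ileq \mff \imee \mtt = \bot$. With the bijection in hand, verifying that $\Phi$ preserves each connective amounts to identities such as $(x \lmee y) \imee \mtt = (x \imee \mtt) \lmee (y \imee \mtt)$ and $\neg(x \lmee y) \imee \mtt = (\neg x \imee \mtt) \lvee (\neg y \imee \mtt)$, all of which follow from interlacing combined with the distributive laws. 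The main obstacle I anticipate is precisely the decomposition identity above: this is the one step that genuinely exploits interlacing, and without it there is no reason for the ``positive'' and ``negative'' parts of $x$ to recombine into $x$ via an information-order join. Once it is secured, the rest of the verification is a mechanical calculation inside an interlaced distributive bilattice.
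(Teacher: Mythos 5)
You should know at the outset that the paper itself gives no proof of Lemma~\ref{l:twist-bilat}: it is quoted from Rivieccio's thesis~\cite{rivieccio10}, extending Avron's representation theorem~\cite{avron1996structure}, so there is no in-paper argument to compare you against. Your plan coincides with the standard proof from that literature: the direction (2)$\Rightarrow$(1) is indeed a routine componentwise check, and your choices for the other direction are the right ones --- the cone $L=\{a\in B: a\ileq \mtt\}$, the map $\Phi(x)=(x\imee\mtt,\,\neg x\imee\mtt)$ and its inverse $\Psi(a,b)=a\ivee\neg b$ all match what happens in the concrete twist model, and your reduction of $\Psi\circ\Phi=\id$ to the decomposition identity $x=(x\imee\mtt)\ivee(x\imee\mff)$ via $\neg\mtt=\mff$ and the fact that $\neg$ is an $\ileq$-automorphism is correct.

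As a proof, however, the proposal has a genuine gap, and it is exactly the one you flag yourself: the decomposition identity is never established, only ``anticipated,'' and it is not among (BI-1)--(BI-4); the same goes for facts you use silently, namely $\mff\imee\mtt=\bot$ (needed for $\Phi\circ\Psi=\id$), the distribution of $(-)\imee\mtt$ over $\ivee$, and the coincidence of $\imee$ with $\lmee$ (and $\ivee$ with $\lvee$) on $L$. Note also that interlacing alone cannot give you ``$L$ is distributive'': an interlaced bilattice may be $L_0\twist$ for a non-distributive $L_0$, so distributivity of $L$ must come from the paper's standing assumption that all bilattices are distributive (under which clause (1) is automatic anyway). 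The good news is that with that assumption the gap closes in a few lines. First prove the constant laws: $\top\lmee(\mtt\ivee\mff)=(\top\lmee\mtt)\ivee(\top\lmee\mff)=\top\ivee\mff=\top$ and $\top\lvee(\mtt\ivee\mff)=(\top\lvee\mtt)\ivee(\top\lvee\mff)=\mtt\ivee\top=\top$, whence $\top\lleq\mtt\ivee\mff\lleq\top$, i.e.\ $\mtt\ivee\mff=\top$; a dual computation gives $\mtt\imee\mff=\bot$. Then $(x\imee\mtt)\ivee(x\imee\mff)=x\imee(\mtt\ivee\mff)=x\imee\top=x$, your missing ingredient $\mff\imee\mtt=\bot$ is now available, and identities such as $z\lvee\bot=z\lvee(\mtt\imee\mff)=(z\lvee\mtt)\imee(z\lvee\mff)=z\imee\mtt$ and $z\lmee\bot=z\imee\mff$ make the remaining homomorphism checks genuinely mechanical. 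If instead you insist on proving the statement for merely interlaced, not necessarily distributive, bilattices (with $L$ then only a bounded lattice), the decomposition identity still holds but requires Avron's longer chain of monotonicity-and-absorption lemmas; that chain is precisely the work your sketch defers, so in that setting it cannot be waved through as mechanical.
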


There are also similar representation theorems for implicative bilattices and many quasi-Nelson-type algebras~\cite{rivieccio2014implicative,RivieccioBusaniche2024conuclei}.

\section{Linear logic}
\label{s:linear-logic}

Girard's \cite{girard1987linear} linear logic enjoys a fair deal of popularity within the computer science community. This is due to its its ability to track resource usage.
In fact, linear logic was the main inspiration behind Rust~\cite{MatsakisKlock2014}, a relatively new and promising programming language whose strong points are safety guarantees of resource usage.

Linear logic is a substructural logic with connectives split into groups of multiplicatives \(\otimes, \lpar, \lolli\), additives \(\oplus, \&\), exponentials \(!,?\), and a linear negation \((\text{-})^\bot\).
It admits a very natural Brouwer--Heyting--Kolmogorov (BHK) interpretation, akin to that of intuitionistic logic. Shulman~\cite{Shulman2022affine} defines this in terms of the dynamics between \emph{proofs} (affirmations) and \emph{refutations}. For example, the interpretation of additive conjunctions turns out to be very familiar to anyone who has seen twist-constructions:
\begin{itemize}
    \item a proof of \(P \oplus Q\) is a proof of \(P\) and a proof of \(Q\), and
    \item a refutation of \(P \oplus Q\) consists of a refutation of \(P\) or a refutation of \(Q\).
\end{itemize}
Similarly, for negation we have that
\begin{itemize}
    \item a proof of \(P^\bot\) is a refutation of \(P\)
    \item a refutation of \(P^\bot\) is a proof of \(P\)
\end{itemize}
Perhaps the most unusual are the multiplicative connectives, e.g.\ the multiplicative disjunction \(\lpar\) (called `par') is interpreted as follows.
\begin{itemize}
    \item a proof of \(P \lpar Q\) is a method of converting a refutation of \(P\) into a proof of \(Q\) and method of converting a refutation of \(Q\) into a proof of \(P\), and
    \item a refutation of \(P \lpar Q\) is a refutation of \(P\) and a refutation of \(Q\).
\end{itemize}
The main restriction, when compared to paraconsistent logics, is that we do not allow statements to be both provable and refutable.

\paragraph{Chu construction}
The above BHK interpretation is justified by the so-called \emph{Chu construction}~\cite{chu1978constructing,chu1979constructing}, which is a standard method for building models of linear logic from models of intuitionistic logic. In the following we describe the \emph{proof-irrelevant} version of this construction following the \emph{proof-relevant} description of Shulman~\cite{Shulman2022affine}. This means that the same construction works not only for partially ordered structures but also for categories.

Given a Heyting algebra \(H\), the universe of \(\Chu(H,0)\) consists of pairs \(a = (a_+, a_-) \in H\times H\) such that \(a_+ \land a_- = 0\). In other words, we view propositions as pairs of intuitionistic statements (i.e.\ elements of \(H\)), representing the affirmation and refutation parts respectively. The latter condition forbids the statements to be both affirmable/provable and refutable.

The linear connectives are defined in \(\Chu(H,0)\) as follows:
\begin{align*}
    a \oplus b &= (a_+ \land b_+,\, a_- \lor b_-)
    &
    a \otimes b &= (a_+ \land b_+,\, (a_+ \to b_-)\land (b_+ \to a_-))
    \\
    a \mathbin{\&} b &= (a_+ \lor b_+,\, a_- \land b_-)
    &
    a \lpar b &= ((a_- \to b_+)\land (b_- \to a_+),\, a_- \land b_-)
    \\
    a^\bot &= (a_-,\, a_+)
    &
    a \lolli b &= ((a_+ \to b_+)\land (b_- \to a_-),\, a_+ \land b_-)
    \\
    !a &= (a_+,\, a_+ \to 0)
    &
    ?a &= (a_- \to 0,\, a_-)
\end{align*}
Note that \(!a\) is also sometimes defined as \((a_+, a_+ \to a_-)\). This is equivalent thanks to the assumption that \(a_+ \land a_- = 0\), since \(a_+ \to a_- \leq a_+ \to 0\) always holds and, by residuation, \(a_+ \land (a_+ \to a_-) = a_+ \land a_- \leq 0\) implies the converse \(a_+ \to 0 \leq a_+ \to a_-\). Dually, \(?a\) can be defined as \((a_- \to a_+, a_-)\).

It is also customary to equip \(\Chu(H,0)\) with constants \(\mtt,\mff\) defined as in \(H\twist\). Then,~\(a^\bot\) is definable from other connectives. In fact \(a^\bot\) is simply \(a \lolli \mff\), indeed:
\begin{align*}
    a \lolli \mff &= ((a_+ \to 0) \land (1 \to a_-),\, a_+ \land 1)
    \\
    &= ((a_+ \to 0) \land a_-, a_+)
    \\
    &= (a_-, a_+) = a^\bot
\end{align*}
where the penultimate equality holds from the fact that \(a_+ \land a_- = 0\) which yields \(a_- \leq a_+ \to 0\) by residuation.

Next, we observe that the structure of~\(\Chu(H,0)\) is term-definable from the bilattice structure of~\(H\twist\).

\begin{lemma}
    \label{l:ll-vs-bilat}
    The following holds for any \(a,b\in \Chu(H,0)\):
    \begin{align*}
        a \oplus b &= a \lmee b
        &
        a \otimes b &= \neg (a \simply \neg b) = a * b
        \\
        a \mathbin{\&} b &= a \lvee b
        &
        a \lpar b &= \neg a \simply b
        \\
        a^\bot &= \neg a
        &
        a \lolli b &= a \simply b
        \\
        !a &= \neg (a \oimply \mff)
        &
        ?a &= (\neg a) \oimply \mff
    \end{align*}
\end{lemma}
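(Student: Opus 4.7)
\emph{Plan.} Every identity in the statement is in principle a routine unfolding of the definitions, so the proof will consist of eight direct computations in $H\twist$, organised so that the easy equalities come first and feed into the harder ones. I would group the eight claims into four blocks.

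\emph{Block 1: the additives and negation.} Comparing the definition of $\oplus, \mathbin{\&}, (\cdot)^\bot$ with the twist definitions of $\lmee, \lvee, \neg$ gives $a\oplus b = a\lmee b$, $a\mathbin{\&} b = a\lvee b$, and $a^\bot = \neg a$ immediately on both coordinates, with no use of the constraint $a_+\land a_- = 0$.

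\emph{Block 2: the linear implications $\lolli$ and $\lpar$.} Here I unfold the definition~\eqref{eq:simply} of $\simply$: $a\simply b = (a\oimply b)\lmee(\neg b\oimply \neg a)$. Applying the weak implication twice and then $\lmee$ yields the pair $\bigl((a_+\to b_+)\land(b_-\to a_-),\;(a_+\land b_-)\lor(b_-\land a_+)\bigr)$, whose second coordinate simplifies to $a_+\land b_-$. This is exactly $a\lolli b$. The identity $a\lpar b = \neg a\simply b$ is obtained by substituting $\neg a$ for $a$ in the preceding computation and rewriting $\neg\neg a = a$.

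\emph{Block 3: the multiplicatives $\otimes$ and $*$.} By the definition of $*$ we have $a*b = \neg(b\simply \neg a)$, so I first compute $b\simply\neg a$ using Block~2, then apply~$\neg$; the symmetry of the resulting expression in $a$ and $b$ shows simultaneously that $a*b = \neg(a\simply\neg b)$, matching the definition of $a\otimes b$ on both coordinates.

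\emph{Block 4: the exponentials.} Using $\mff = (0,1)$ and the defining formula for $\oimply$, one computes $a\oimply\mff = (a_+\to 0,\;a_+)$, whence $\neg(a\oimply\mff) = (a_+,\,a_+\to 0) = {!}a$. The dual calculation with $\neg a$ in place of $a$ gives $?a$.

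\emph{Main obstacle.} There is no real obstacle; the arguments are bookkeeping. The one subtlety worth flagging is that the constraint $a_+\land a_- = 0$ defining $\Chu(H,0)$ is not needed in any of the eight equalities (each is an identity of pairs in $H\times H$); the constraint is only what guarantees that the right-hand sides land back in $\Chu(H,0)$, which follows automatically because the left-hand sides are the linear-logic operations on $\Chu(H,0)$.
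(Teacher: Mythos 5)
Your proposal is correct and follows essentially the same route as the paper: a direct unfolding of the twist-structure definitions, with the additives, negation and exponentials immediate and the real computation concentrated in $\simply$ (giving $\lolli$), from which $\otimes$, $*$ and $\lpar$ follow by applying $\neg$ and the involution $\neg\neg a = a$. Your observation that the constraint $a_+\land a_-=0$ is never used in these eight identities matches the paper's proof, which likewise makes no appeal to it.
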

\begin{proof}
    The only expressions that need justification are those for \(a \otimes b\), \(a \lpar b\) and \(a \lolli b\).
    For the latter, by definition we have:
    \begin{align*}
        a \simply b
        &= (a \oimply b) \lmee (\neg b \oimply \neg a)
        \\
        &= (a_+ \to b_+,\, a_+ \land b_-) \lmee (b_- \to a_-,\, b_- \land a_+)
        \\
        &= ((a_+ \to b_+)\land (b_- \to a_-),\, (a_+ \land b_-) \lor (b_- \land a_+))
        = a \lolli b
    \end{align*}

    Next, observe that \(\neg(a \otimes b) = ((a_+ \to b_-)\land (b_+ \to a_-),\, a_+\land b_+) = a \simply \neg b\) from which we obtain the desired formula for \(a \otimes b\).
    Lastly, the formula for \(a \lpar b\) follows from the fact that \(a \lpar b = \neg (\neg a \otimes \neg b)\) and \(\neg \neg a = a\).
\end{proof}

One of the original motivation behind linear logic in Girard~\cite{girard1987linear} was to better understand implication, by expressing it in terms of more fine-grained connectives. This corresponds to the embedding \(H \to \Chu(H,0)\), \(a \mapsto (a, a\to 0)\), where intuitionistic implication \(a \to b\) gets interpreted as \(!a \lolli b\). It turns out that this is precisely weak implication in bilattice logic:
\begin{align*}
    !a \lolli b
    &= (a_+, a_+ \to 0) \lolli b
    \\
    &= ((a_+ \to b_+)\land (b_- \to (a_+ \to 0)),\, a_+ \land b_-)
    \\
    &= ((a_+ \to b_+)\land (a_+ \to (b_- \to 0)),\, a_+ \land b_-)
    \\
    &= ((a_+ \to b_+),\, a_+ \land b_-) = a \oimply b
\end{align*}
where the penultimate equality holds because \(a_+ \to b_+\leq a_+ \to (b_- \to 0)\). Indeed, intuitionistic implication~\(\to\) is monotone in its second argument and we have that \(b_+ \leq b_- \to 0\) from \(b_+ \land b_- = 0\) by residuation.

Note that models of linear logic of the form \(\Chu(H,0)\) satisfy some extra theorems that are not typically true in linear logic, such as \(!!P \equiv !P\), \(!P \equiv P \otimes P\), \(P \oplus (Q \& R) \equiv (P\oplus Q) \& (P\oplus R)\), etc.
As Shulman~\cite{Shulman2022affine} points out, this makes the logic of \(\Chu(H,0)\) uninteresting to linear logicians but, on the other hand, it provides a useful tool for studying constructive mathematics, by being a richer logic that allows for more precise distinctions in constructive statements when compared to intuitionistic logic. Given our observation above that the linear connectives of the Chu construction are term definable in Belnap--Dunn logic we might suggest that, in fact, Belnap--Dunn logic might be even better suited for studying constructive statements.

\begin{remark}
    There is another way to obtain twist-product models of linear logic.
    Indeed, Avron~\cite{avron1988ll} makes a link between linear logic and relevance logic where, furthermore, the latter was shown to leverage twist-products for model constructions~\cite{tsinakis2006,fussner2019mingles}.
    What is interesting about Lemma~\ref{l:ll-vs-bilat} is that it gives a direct and simple description of linear logic connectives in terms of bilattice connectives which, moreover, is an exact correspondence in most cases.
\end{remark}

\section{d-Frames}
\label{s:dfrm}

The second instance where four-valued logic occurs in theoretical computer science has the same origins as Belnap--Dunn logic, that is, in domain theory.

\paragraph{Continuous domains}
\emph{Domains} are partially ordered structures \((D, \ileq)\) with the extra requirement that its \emph{directed subsets} \(A \dirsue D\) have suprema, denoted by \(\idirvee A\).
Intuitively, domains encode the information obtained during some computation. The order \(a \ileq b\) encodes that, at state~\(a\), the computed information is less than or equal to that of \(b\). Consequently, directed suprema~\(\idirvee\) model computation, e.g.\ by means of recursion, loops and alike.

As argued already by Scott~\cite{scott1970outline}, in order for a domain to correspond to a real computation, it has to be of a special type.
It has to have a countable (computer enumerable) base \(B \sue D\), which we think of as `finite pieces' of information from which the rest of the domain is computed.

To make this precise, for \(k,l \in D\), we say that \emph{\(k\) approximates \(l\)}, and write \(k \ll l\), if for any \(A \dirsue D\)
\[
    l \ileq \idirvee A \ee\implies k \ileq a \text{ for some \(a\in A\)}.
\]
Then, the domain \((D,\ileq)\) is \emph{continuous} if it has a base \(B\) such that, for every \(x\in D\), \(x = \idirvee \{ b\in B \mid b \ll x\}\).
This models the idea that every state in our domain can be computed from finite pieces of information that approximate it.
Note that the requirement that the base is countable is often dropped.

An important feature of domains is that they can be viewed as \emph{topological spaces}. A domain~\(D\) can be equipped with the topology of Scott-open sets, i.e.\ upwards closed sets \(U\sue D\) such that \(\idirvee A \in U\) implies \(A\cap U \not= \emptyset\).
This opens up the door for connections between computation, domain theory and (non-Hausdorff) topology, which has been a fruitful field of study over the years, see e.g.~\cite{mislove1998topology,escardo2004synthetic,GoubaultLarrecq2013}.

\paragraph{Bitopological spaces and their duals}
Domains were introduced by Scott~\cite{scott1970outline}
as an alternative to the untyped lambda calculus for uses in semantics. Famously, by coincidence, domains also played a key role in providing a concrete semantics to the untyped lambda calculus~\cite{scott1993}, which was an important open problem at that time.
Since then they were used to model other programming phenomena such as non-determinism, exceptions, and so on.

However, it is still not known how to model probabilistic computation with continuous domains\footnote{Formally, it is not known if there is a cartesian closed subcategory of continuous domains which is closed under the probabilistic powerdomain construction.}, which remains a major challenge in the field.

The search for the solution accelerated the study of topological properties of domains (via their Scott topology) and, in particular, of the class of \emph{stably compact spaces} which correspond to a well-behaved class of continuous distributive lattices, see e.g.~\cite{lawson2011stably,GoubaultLarrecq2013}.

A distinct feature of stably compact spaces is that their open and compact subsets are dual to each other, in some sense.
In fact, compact subsets generate the so called \emph{dual topology}, which plays an important role the theory.
This naturally led Jung and Moshier~\cite{jung06} to focus on bitopological spaces, as a convenient tool for describing properties of stably compact spaces in terms of their dual topology.
To this end, they introduced a new type of algebraic duals of bitopological spaces, called \emph{d-frames}, and this is where our story turns back to Belnap.

Recall that any topological space \((X,\tau)\) induces the lattice \(L = (\tau, \sue)\) of its open sets in the subset order. This lattice is a complete Heyting algebra or, equivalently, it is a complete lattice such that, for any \(A\sue L\) and \(b\in L\), the equation
\[(\bigvee A)\wedge b = \bigvee\nolimits_{a\in A} (a\wedge b)\]
holds. Complete lattices satisfying this distributivity law are also often called \emph{frames} or \emph{locales}~\cite{picadopultr2011}.
For a large class of topological spaces, i.e.\ for the so-called \emph{sober spaces} (which includes all Hausdorff spaces), the frame of opens \(L\) contains all information about the space \(X\) and, in fact, the set of points of \(X\) can be fully recovered from \(L\).

In case we have a \emph{bitopological space} \((X,\tau_+,\tau_-)\), which constitutes of a set \(X\) and two topologies \(\tau_+,\tau_-\), we have two frames \(L_+ = (\tau_+, \sue)\) and \(L_- = (\tau_-, \sue)\).
In order to describe the interaction of the two topologies, Jung and Moshier \cite{jung06} add two relations \(\con,\tot \sue L_+\times L_-\) which represent pairs of disjoint subsets and pairs of subsets that cover the whole space, respectively.
This is all is bundled into one structure as follows.

\begin{definition}
    A \emph{d-frame} is a tuple \((L_+, L_-, \con, \tot)\) where \(L_+, L_-\) are frames and \(\con, \tot \sue L_+\times L_-\) are binary relations satisfying the following axioms:
    \begin{enumerate}
        \item \(\con\) and \(\tot\) are logical sublattices of \(L_+ \bowtie L_-\), i.e.\ they contain \(\mff = (0,1)\), \(\mtt = (1,0)\) and are closed under~\(\lmee, \lvee\),
        \item \(\con\) is downwards closed and \(\tot\) is upwards closed in the \(\ileq\) order of~\(L_+ \bowtie L_-\),
        \item \(\con\) is closed under directed joins \(\idirvee\) in~\(L_+ \bowtie L_-\),
        \item \(\con(a,b)\) and \(\tot(a,c)\) implies \(b\leq c\) and, dually,\\
              \(\con(a,b)\) and \(\tot(c,b)\) implies \(a\leq c\).
    \end{enumerate}
\end{definition}

It is immediate that any bitopological space \((X,\tau_+, \tau_-)\) yields the d-frame \((\tau_+, \tau_-, \con_X, \tot_X)\) where \(\con_X(U_+, U_-)\) whenever \(U_+\cap U_- = \emptyset\) and \(\tot_X(U_+,U_-)\) whenever \(U_+\cup U_- = X\).
Item 4 in the axiomatisation of d-frames is inspired by this construction of d-frames from bitopological spaces. It is simply an instance of the set-theoretic fact that \(U \cap V = \emptyset\) and \(W \cup V = X\) imply \(U \subseteq W\).

The notation for \(\con\) refers to `consistency' from logic.
For example, similarly to Ginsberg~\cite{ginsberg1988multivalued}, assume \(X\) is a set of words or models of some theory.
Then, a pair of opens \((U_+, U_-)\in \tau_+ \times \tau_-\) represents the set of models \(U_+\) where a given predicate holds and the set of models \(U_-\) where it fails.
Then, \(\con(U_+,U_-)\) represents that there is no model where the predicate would be both true and false, i.e.\ the predicate is consistent.
On the other hand, a predicate is `total', formally written as \(\tot(U_+,U_-)\), if for every model~\(x\in X\), either \(x\) satisfies the predicate (i.e.\ \(x\in U_+\)) or \(x\) fails it (i.e.\ \(x\in U_-\)).

A special type of predicates are those for which both \(\con_X(U_+,U_-)\) and \(\tot_X(U_+, U_-)\). These can be referred to as \emph{classical} since we have that \(U_+\) and \(U_-\) are complements of each other.

\begin{remark}[Twist representation]
    As is evident from the axiomatisation, despite keeping the two lattices \(L_+, L_-\) separate, we view d-frames as twist-structures \mbox{\(L_+ \bowtie L_-\)} with two unary predicates \(\con, \tot \sue L_+ \bowtie L_-\).
    Conversely, there is a twist-structure decomposition theorem similar to Lemma~\ref{l:twist-bilat} for d-frames. They can be viewed as structures \((L, \mff, \mtt, \con, \tot)\) where \((L,\ileq)\) is a frame, \(\mff,\mtt\) are two complemented constants in \(L\) and \(\con, \tot\) are two unary relations axiomatised similarly to the above. The logic order \(\lleq\), necessary for the axiomatisation, is recovered from~\(\mff,\mtt\) and~\(\ileq\) as follows~\cite[page 27]{jung06}.
\[
    \alpha \lleq \beta \iff (\alpha \imee \mtt) \ileq (\beta \imee \mtt)
    \ete{and} (\beta \imee \mff) \ileq (\alpha \imee \mff)
\]
\end{remark}

\begin{remark}[On completeness, continuity and computation]
    \label{r:compl+cont-dfrm}
    We might be led to think of d-frames as of bilattices with additional relations.
    However, we also have the extra assumption that the individual components \(L_+\) and \(L_-\) are complete lattices or, in fact, complete Heyting algebras.

    Note that completeness in the information order \(\ileq\) was assumed by Belnap from the very beginning, cf.~\cite{belnap77}.
    In fact, Belnap also spoke of computability aspects and by that he meant that the structure \((L_+ \bowtie L_-, \ileq)\) is a continuous domain, in the sense that we discussed above. This is evident either from the last paragraphs of \S 81.1 or from \S 81.3.2 of~\cite{belnap77}.

    Notably, both of these conditions were left out in the work of Belnap's followers.
    In the setting of d-frames, on the other hand, the computability aspects reappear automatically.
    As discussed earlier, stably compact spaces \((X,\tau)\) correspond to bitopological spaces \((X,\tau_+, \tau_-)\) where \mbox{\(\tau_+ = \tau\)} and \(\tau_-\) is the dual topology to \(\tau\). Then, the corresponding d-frame \((\tau_+, \tau_-, \con_X, \tot_X)\) satisfies that \((\tau_+ \bowtie \tau_-, \ileq)\) is a continuous domain~\cite{jung06,jakl2018}.
    Finally, in Section~\ref{s:partial-frm} below, we look at the `partial frame' of consistent elements of a d-frame. It turns out that every partial frame arising from a stably compact space is a continuous domain too.
\end{remark}

\subsection{Bilattices \emph{versus} d-frames}

Since d-frames are based on the twist-structure \(L_+ \bowtie L_-\), they interpret a lot of the logic of bilattices. However, importantly, they lack negation and weak implication. This is because they are \emph{non-symmetric}, i.e.\ the lattices \(L_+\) and \(L_-\) are potentially different, and so we cannot define \(\neg\) and \(\oimply\) in the same way as for bilattices.
On the other hand, the language of d-frames contains the consistency \(\con\) and totality \(\tot\) relations and directed suprema in the knowledge order \(\idirvee\).
These differences are summarised in the first two columns in Figure~\ref{l:bilat-vs-dfrm}. 

\begin{figure}[ht]
    \begin{center}
    \begin{tabular}{c | c | c}
        bilattices & d-frames & nd-frames \\
        \hline
        carrier symmetric & carrier non-symmetric & carrier non-symmetric\\
        ($L\times L$)    & ($L_+\times L_-$)    & ($L_+\times L_-$)      \\[0.5em]
        $\neg$, $\oimply$ & ---                   & $\neg$, $\oimply$    \\[0.5em]
        ---               & $\con$, $\tot$, $\idirvee$  & $\con$, $\tot$, $\idirvee$ \\[0.5em]
    \end{tabular}
    \end{center}
    \label{l:bilat-vs-dfrm}
    \caption{Comparison of bilattice and d-frame structure.}
\end{figure}

Note that, for a given lattice \(L\), the bilattice \(L\twist\) can be extended with canonical \(\con\twist\) and \(\tot\twist\) as follows:
\[
    \con\twist(a,b) \iff a\land b = 0
    \qtq{and}
    \tot\twist(a,b) \iff a\lor b = 1
\]
Furthermore, if \(L\) is complete then \(L\twist\) also has directed suprema \(\idirvee\), computed componentwise in \(L\)\footnote{In fact, if \(L\) is a frame, it follows that \((L, L, \con\twist,\tot\twist)\) is a d-frame.}.
On the other hand, there are other possible consistency and totality relations on \(L\twist\) or, in fact, on any \(L_+ \bowtie L_-\). For example, we have the following trivial relations.
\[
    \con^\text{triv}(a,b) \iff a=0 \ete{or} b=0
    \qtq{and}
    \tot^\text{triv}(a,b) \iff a=1 \ete{or} b=1
\]
The axiomatisation of d-frames allows for these and many more types of consistency and totality relations.

However, probably the main distinguishing factors of d-frames from bilattices is that they are non-symmetric and have a natural topological interpretation.
In fact, in the next paragraphs we discuss how we can use the intuition from bitopological spaces to extend the signature of d-frames, so that they interpret also \(\neg\) and \(\oimply\), to subsume both bilattices and d-frames, as shown in the third column of Figure~\ref{l:bilat-vs-dfrm}.

\paragraph{Interior operators}

Looking at the formulas for \(\neg\) and \(\oimply\) for bilattices from their twist-structure, in order to define similar operations for d-frames, we need some way of interpreting each element \(a_+\in L_+\) as an element of \(L_-\) and vice versa.
Following \cite{jakljungpultr2016,jakl2018} we solve this puzzle by looking for inspiration in bitopological spaces. Given a bitopological space \((X,\tau_+,\tau_-)\) and an open set \(U_+\in \tau_+\), we can interpret it as the open set with respect to the other topology \(\tau_-\) by taking its \(\tau_-\)-interior. Conversely, any \(U_-\in \tau_-\) can be sent to \(\tau_+\) by taking the \(\tau_+\)-interior of \(U_-\).
Formally, we define the following two maps.
\begin{align}
    m &\colon \tau_+ \to \tau_-,& U_+ \mapsto \bigcup \{ V_-\in \tau_- \mid V_- \sue U_+\}
    \label{eq:m}
    \\
    p &\colon \tau_- \to \tau_+,& U_- \mapsto \bigcup \{ V_+\in \tau_+ \mid V_+ \sue U_-\}
    \notag
\end{align}

Inspired by this, we extend the signature of d-frames with extra two maps which are required to satisfy properties that we know hold for interior operators.

\begin{definition}
    An \emph{nd-frame} \((L_+, L_-, \con, \tot, p,m)\) is a d-frame equipped with maps \(p\colon L_- \to L_+\) and \(m\colon L_+ \to L_-\) satisfying the following properties:
    \begin{enumerate}
        \item \(p(a \land b) = p(a) \land p(b)\) and \(m(a\land b) = m(a) \land m(b)\),
        \item \(p(1) = 1\) and \(m(1) = 1\),
        \item \(p(0) = 0\) and \(m(0) = 0\),
        \item \(p(m(a)) \leq a\) and \(m(p(a)) \leq a\),
        \item \(\con\) axioms:
        $$
        \frac{(a\wedge b,c)\in \con}{(a,m(b)\wedge c)\in \con}
        \qquad
        \frac{(a,b\wedge c)\in \con}{(a\wedge p(b),c)\in \con}
        $$

        \item \(\tot\) axioms:
        $$
        \frac{(a,m(b)\vee c)\in \tot}{(a\vee b,c)\in \tot}
        \qquad
        \frac{(a\vee p(b),c)\in \tot}{(a,b\vee c)\in \tot}
        $$
    \end{enumerate}
\end{definition}

As expected, given any bitopological space \((X,\tau_+, \tau_-)\), its induced d-frame \((\tau_+, \tau_-, \con_X, \tot_X)\) equipped with \(p\) and \(m\), as defined in \eqref{eq:m}, is an nd-frame.

Then, with the structure of nd-frames, we can define the missing logical connective by adapting the corresponding formulas for bilattices. Given \(x, y\in L_+\times L_- \), define
\begin{align}
    \neg x = (p(x_-), m(x_+))
    \qtq{and}
    x \oimply y = (x_+ \to y_+, m(x_+) \land y_-).
    \label{eq:neg-opimply-ndfrm}
\end{align}

We can now compare the logical structure of nd-frames with that of bilattices. The following shows that most of the axioms of the Hilbert-style axiomatisation of Belnap--Dunn logic, given by Arieli and Avron~\cite{arieli96}, still holds for nd-frames.
We write \(\alpha\equiv \beta\) as a shorthand for \((\alpha \oimply \beta) \lmee (\beta \oimply \alpha)\). Also, we say that a formula \(\varphi\) \emph{is valid} in a d-frame~\(\p L\) if for any valuation \(v\) of variables in \(\p L\), \(v(\varphi) = v(\varphi \oimply \varphi)\) i.e.\ if \(\mtt \ileq v(\varphi)\) in \(\p L\).

\begin{theorem}[Theorem~4.2 in \cite{jakljungpultr2016}]
        The following axioms of four-valued logic are valid in any nd-frame:\\[0.3em]
(Weak implication)
\begin{axioms}
\item[($\oimply 1$)]  $\varphi \oimply (\psi\oimply \varphi)$
\item[($\oimply 2$)]  $(\varphi \oimply (\psi\oimply \gamma)) \oimply ((\varphi\oimply \psi) \oimply (\varphi\oimply \gamma))$
\item[($\neg\neg$ R)] $\neg \neg \varphi \oimply \varphi$ \addStar{A}
\end{axioms}
(Logical conjunction and disjunction)
\begin{axioms}
\item[($\wedge \oimply$)] $(\varphi\wedge \psi)\oimply \varphi$ and $(\varphi\wedge \psi)\oimply \psi$
\item[($\oimply \wedge$)] $\varphi\oimply (\psi \oimply (\varphi\wedge \psi))$
\item[($\oimply \dtt$)]   $\varphi\oimply \dtt$
\item[($\oimply \vee$)]   $\varphi\oimply (\varphi\vee \psi)$ and $\psi\oimply (\varphi\vee \psi)$
\item[($\vee \oimply$)]   $(\varphi \oimply \gamma)\oimply ((\psi\oimply \gamma)\oimply ((\varphi\vee \psi)\oimply \gamma))$
\item[($\oimply \dff$)]   $\dff \oimply \varphi$
\end{axioms}
(Informational conjunction and disjunction)
\begin{axioms}
\item[($\sqcap \oimply$)] $(\varphi\sqcap \psi)\oimply \varphi$ and $(\varphi\sqcap \psi)\oimply \psi$
\item[($\oimply \sqcap$)] $\varphi\oimply (\psi \oimply (\varphi\sqcap \psi))$
\item[($\oimply \top$)]   $\varphi\oimply \top$
\item[($\oimply \sqcup$)] $\varphi\oimply (\varphi\sqcup \psi)$ and $\psi\oimply (\varphi\sqcup \psi)$
\item[($\sqcup \oimply$)] $(\varphi \oimply \gamma)\oimply ((\psi\oimply \gamma)\oimply ((\varphi\sqcup \psi)\oimply \gamma))$
\item[($\oimply \bot$)]   $\bot \oimply \varphi$
\end{axioms}
\noindent(Negation)
\begin{axioms}
\item[($\neg\wedge$ L)]      $\neg(\varphi\wedge \psi) \oimplied \neg \varphi\vee   \neg \psi$ \addStar{B}
\item[($\neg~\vee$)]         $\neg(\varphi\vee   \psi) \equiv    \neg \varphi\wedge \neg \psi$
\item[($\neg~\sqcap$)]       $\neg(\varphi\sqcap \psi) \equiv    \neg \varphi\sqcap \neg \psi$
\item[($\neg\sqcup$ L)]      $\neg(\varphi\sqcup \psi) \oimplied \neg \varphi\sqcup \neg \psi$ \addStar{B}
\item[($\neg\!\!\oimply$ R)] $\neg (\varphi\oimply \psi) \oimply \varphi\wedge \neg \psi$ \addStar{A}
\end{axioms}

\noindent
Furthermore, the rule of Modus Ponens is sound:
\begin{axioms}
\item[(MP)] $\varphi, (\varphi \oimply \psi) \vdash \psi$
\end{axioms}
\end{theorem}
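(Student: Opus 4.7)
The plan is to reduce every axiom to a componentwise computation in the twist-structure $L_+ \bowtie L_-$. Writing an arbitrary element as $(a_+, a_-)$, the validity condition $\mtt \ileq v(\varphi)$ unpacks (since $\mtt = (1,0)$ and $\ileq$ is componentwise $\leq$) to the single demand that the positive component of $v(\varphi)$ equals $1$. For any implication $\alpha \oimply \beta$, by the definition in \eqref{eq:neg-opimply-ndfrm} the positive component is $\alpha_+ \to \beta_+$, so each axiom of the form $\alpha \oimply \beta$ reduces to proving $\alpha_+ \to \beta_+ = 1$, i.e.\ $\alpha_+ \leq \beta_+$, in the Heyting algebra $L_+$. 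First I would organise the axioms by which connectives appear on the outermost negation layer, so that the computation of $\alpha_+$ and $\beta_+$ can be read off mechanically from the twist-formulas for $\lmee, \lvee, \imee, \ivee, \neg, \oimply$.

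For the ``pure'' blocks (weak implication axioms $(\oimply 1), (\oimply 2)$, the logical $\wedge,\vee$ block, the informational $\sqcap,\sqcup$ block, and Modus Ponens) the positive component collapses to a standard Heyting algebra tautology such as $a_+ \to (b_+ \to a_+)$ or $(a_+ \to b_+) \land a_+ \to b_+$, and no appeal to $p$ or $m$ is needed — the proof is immediate from the fact that $L_+$ is a frame. For MP, from validity of $\varphi$ and $\varphi \oimply \psi$ one has $a_+ = 1$ and $a_+ \to b_+ = 1$, forcing $b_+ = 1$, which is exactly validity of $\psi$. The only mild care needed in these blocks is remembering that $\dtt$ and $\dff$ denote the logical-order constants $\mtt = (1,0)$ and $\mff = (0,1)$, while $\top = (1,1), \bot = (0,0)$ are the information-order constants.

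The substantive work lies in the negation block and in $(\neg\neg\,\mathrm{R})$, $(\neg\!\!\oimply\,\mathrm{R})$, which are precisely the items flagged $\star A$ and $\star B$. Here I would exploit the nd-frame axioms 1--4 for $p$ and $m$: meet-preservation, preservation of $0,1$, and the crucial inequalities $p(m(a)) \leq a$ and $m(p(a)) \leq a$. For instance, $(\neg\neg\,\mathrm{R})$ reduces on its positive side to $p(m(a_+)) \to a_+ = 1$, which is axiom~4 verbatim; $(\neg\!\!\oimply\,\mathrm{R})$ reduces to $p(m(a_+) \land b_-) \leq a_+ \land p(b_-)$, which follows by splitting with axiom~1 and then applying axiom~4. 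The de~Morgan-style axioms $(\neg\,\vee)$ and $(\neg\,\sqcap)$ give a full equivalence because the positive and negative components on both sides separately match under the meet-preservation axiom for $p$ and $m$; the ``weak'' variants $(\neg\wedge\,\mathrm{L})$ and $(\neg\sqcup\,\mathrm{L})$ give only one direction because $p$ and $m$ need not preserve joins, and the reverse inequality $p(a \lor b) \leq p(a) \lor p(b)$ is not among the nd-frame axioms.

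The main obstacle I anticipate is bookkeeping rather than depth: computing $\neg(\varphi \oimply \psi)$ and similar nested expressions quickly produces four-level expressions in $p, m, \land, \to$, and one must track carefully which side of the twist-pair is being manipulated and in which direction $p$ and $m$ point. A uniform notational convention — always displaying terms as pairs $(\cdot,\cdot)$ and computing the positive slot first — should keep this under control, and then every axiom dissolves into a short chain of inequalities that uses only frame distributivity in $L_\pm$ together with the six clauses defining $p$ and $m$.
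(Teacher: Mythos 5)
The paper gives no proof of this theorem (it is quoted from Theorem~4.2 of the cited Jakl--Jung--Pultr paper), and your blind proposal is the expected direct verification: since $\mtt \ileq v(\varphi)$ amounts to the positive component being $1$, every axiom of shape $\alpha \oimply \beta$ reduces to $\alpha_+ \leq \beta_+$ in $L_+$, the non-negation blocks and (MP) become Heyting-algebra tautologies, and the negation block uses exactly axioms 1--4 for $p$ and $m$, with the \theStar{A}/\theStar{B} items correctly traced to $p(m(a))\leq a$ and to the absence of join-preservation. One small correction: in ($\neg~\vee$) the negative components, $m(a_+\vee b_+)$ versus $m(a_+)\vee m(b_+)$, need \emph{not} match (contrary to your remark), but this is harmless because validity of $\alpha \equiv \beta$ only requires $\alpha_+=\beta_+$, which meet-preservation of $p$ already delivers.
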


The axioms denoted by \theStar{A} or \theStar{B} are the only axioms that differ from the axioms of \cite{arieli96}, by only being implications in one direction instead of equivalences.
Requiring equivalence instead of implication in the axioms marked by \theStar{A} is the same as requiring that $p\circ m = \id$ and
the same for axioms marked by \theStar{B} corresponds to requiring that $p$ preserves finite suprema.
Also, in some presentations of bilattices, e.g.\ in \cite{arieli96}, the following axiom called \emph{Peirce's law} is added
\begin{axioms}
\item[($\oimply 3$)] $ ((\varphi\oimply \psi) \oimply \varphi)\oimply \varphi $
\end{axioms}
Assuming this to hold is equivalent to assuming that $L_+$ is a Boolean algebra.

\begin{remark}
We see that d-frames give rise to models of Belnap--Dunn logic. The algebraic operations \(\neg\) and \(\oimply\) were derived semantically, from the topological models. However, the properties of \(\neg\) and \(\oimply\) also have a good interpretation purely logically, which comes from a logical interpretation of the maps \(p\) and \(m\).

We can view \(p\) and \(m\) as maps that switch context, e.g.\ \(p(a)\) interprets the negative information \(a\) in the positive context.
Such change of context can lead to information loss, which is witnessed by axiom (4) of nd-frames and, correspondingly, the fact that \(\neg \neg x \ileq x\) and that ($\neg\neg$ R) holds.

    The same approach of adding negation to d-frames can be adapted to other non-symmetric situations. For example, Rivieccio, Maia and Jung~\cite{rivieccio2020non} equip bilattices \(L_+ \bowtie L_-\) with extra maps \(p\colon L_- \to L_+\) and \(m\colon L_+ \to L_-\), which satisfy the axioms (1)--(4) of nd-frames, to obtain non-symmetric bilattices with negation and implications.
\end{remark}

\begin{remark}[Belnap--Dunn geometric logic]
Note that the usual logic of frames is the so-called \emph{geometric} logic, which is an infinitary logic with arbitrary disjunctions \(\bigvee\) and finitary conjunctions \(\land\). This logic has its d-frame counterpart, described in detail in Section~6.3 of \cite{jakl2018}. It is built out of arbitrary disjunctions and finite conjunctions \(\ibigvee, \imee\) in the information order, and \(\con(\alpha)\) and \(\tot(\alpha)\) predicates, corresponding to the two relations of d-frames.

\end{remark}

\paragraph{Models of four-valued logics from continuous maps}
Let us now look again at the situation where an nd-frame \((L_+, L_-, \con, \tot, p,m)\) validates the stronger version of axioms \theStar{A}, in which the weak implication is replaced by equivalence \(\equiv\). We remarked that this is equivalent to $p\circ m = \id$. In this case we have that, for any \(a\in L_-\) and \(b\in L_+\),
\[
    m(p(b)) \leq b
    \qtq{and}
    a \leq p(m(a))
\]
where the former is from axiom 4 of nd-frames. In other words, we have that \(m\) is the left adjoint to \(p\) and, therefore, \(m\) preserves all suprema. Furthermore, by axioms~1 and~2 of nd-frames, \(m\) preserves finite infima. We obtain that \(m\) is a \emph{frame homomorphism}. In fact, since $p\circ m = \id$, \(m\) is an embedding of frames \(L_+ \hookrightarrow L_-\).

Conversely, given any frame embedding \(m\colon L \hookrightarrow M\), we define the structure \((L,M, \con_m, \tot_m, m, p_m)\) as follows
\[
    \con_m(a,b) \iff m(a) \land b = 0
    \qtq{and}
    \tot_m(a,b) \iff m(a) \lor b = 1
\]
and \(p_m\) is simply the localic map associated to \(m\), i.e.\ it is the right adjoint to \(m\).
A~routine verification shows that this structure is an nd-frame such that \(p \circ m = \id\).
In fact, these two constructions are inverse to each other.

\begin{proposition}
    \label{p:inj-frm-homo}
    There is a one-to-one correspondence between injective frame homomorphisms and nd-frames satisfying \(\varphi \oimply \neg\neg \varphi\).
\end{proposition}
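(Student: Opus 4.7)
The plan is in three parts: first, convert the logical validity condition into a concrete algebraic one; second, verify that the two constructions sketched above produce objects of the right kind; third, show they are mutually inverse.

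First, I would unfold the validity of $\varphi \oimply \neg\neg\varphi$ using the formulas in~\eqref{eq:neg-opimply-ndfrm}. A direct computation yields $\neg\neg\varphi = (p(m(\varphi_+)), m(p(\varphi_-)))$ and
\[
    \varphi \oimply \neg\neg\varphi = \bigl(\varphi_+ \to p(m(\varphi_+)),\, m(\varphi_+) \land m(p(\varphi_-))\bigr),
\]
so validity (the first component equal to $1$) reduces to $\varphi_+ \leq p(m(\varphi_+))$ for all $\varphi_+ \in L_+$. Combined with axiom~(4) of nd-frames, this is equivalent to $p\circ m = \id_{L_+}$.

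Second, I would make the two constructions precise. For the forward direction, given an nd-frame with $p\circ m = \id$, the map $m\colon L_+ \to L_-$ is an injective frame homomorphism: preservation of finite meets and the constants is by axioms~(1)--(3), injectivity is immediate, and preservation of arbitrary joins follows once I establish $m \dashv p$. The adjunction is obtained by interconverting $m(b) \leq a$ and $b \leq p(a)$, using monotonicity of $p$ together with $p\circ m = \id$ in one direction, and axiom~(4) in the other. For the backward direction, given an injective frame homomorphism $m\colon L \hookrightarrow M$ with right adjoint $p_m$, I would check that $(L, M, \con_m, \tot_m, m, p_m)$ satisfies all the nd-frame axioms: most clauses are routine consequences of $m$ being a frame homomorphism and $m \dashv p_m$, while the $\con$ and $\tot$ axioms~(5)--(6) reduce to meet/join identities in $M$ after applying the adjunction. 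Validity of $\varphi \oimply \neg\neg\varphi$ then follows from $p_m\circ m = \id$, which holds because the unit $a \leq p_m(m(a))$ combined with the counit-induced inequality $m(p_m(m(a))) \leq m(a)$ yields $m(a) = m(p_m(m(a)))$, whence $a = p_m(m(a))$ by injectivity of $m$.

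Third, I would verify that the two assignments are mutually inverse. One direction is trivial: extracting the $m$-component of the nd-frame constructed from an injective frame homomorphism $m$ returns $m$. The other direction is the main obstacle and requires showing that, in any nd-frame $(L_+, L_-, \con, \tot, p, m)$ with $p\circ m = \id$, the operator $p$ and the relations $\con, \tot$ are uniquely determined by $m$. The equality $p = p_m$ is automatic from uniqueness of right adjoints. For $\con$, the key observation is that d-frame axiom~(4) together with $\mtt = (1,0) \in \tot$ forces $(1, x) \in \con \iff x = 0$. If $\con(a, b)$ holds, the first $\con$ axiom of nd-frames applied with witnesses $1, a, b$ yields $(1, m(a) \land b) \in \con$, whence $m(a) \land b = 0$. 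Conversely, if $m(a) \land b = 0$ then $(1, m(a) \land b) = (1, 0) = \mtt \in \con$, and the second $\con$ axiom with witnesses $1, m(a), b$ yields $(p(m(a)), b) = (a, b) \in \con$ via $p\circ m = \id$. A dual argument using $\mff = (0,1) \in \con$ and the second $\tot$ axiom (which similarly becomes biconditional under $p\circ m = \id$) gives $\tot = \tot_m$, completing the bijection.
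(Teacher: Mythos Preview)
Your proposal is correct and follows essentially the same route as the paper's proof. The paper handles your parts~1 and~2 in the prose preceding the proposition and then, like you, devotes the actual proof to the ``hard'' direction of the mutual inverse: showing that $p$, $\con$, and $\tot$ are determined by $m$ alone via uniqueness of adjoints, the nd-frame axioms~(5)--(6), and d-frame axiom~(4) applied with $\mtt\in\tot$ and $\mff\in\con$. Your instantiations of the $\con$ rules with witnesses $1,a,b$ and $1,m(a),b$ match the paper's exactly.

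One small point of phrasing: in your $\tot$ clause you speak of ``the second $\tot$ axiom (which similarly becomes biconditional under $p\circ m=\id$)''. In fact both rules of axiom~(6) are needed, just as you used both rules of axiom~(5) for $\con$: the first rule gives $\tot(0,m(a)\lor b)\Rightarrow\tot(a,b)$, while the second (together with $p\circ m=\id$) gives the converse. This is exactly what the paper does, and presumably what you intend by ``dual argument''; just be sure to invoke both rules when you write it out.
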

\begin{proof}
    
    What is left to show is that for any nd-frame \((L_+, L_-, \con, \tot, p,m)\) satisfying \(\varphi \oimply \neg\neg \varphi\), i.e.\ such that \(p \circ m = \id\), we have that \(\con = \con_m\), \(\tot = \tot_m\) and \(p = p_m\).

    The equality \(p = p_m\) holds from uniqueness of adjoints.
    For \(\tot_m \sue \tot\), if \mbox{\(m(a) \lor b = 1\)} then \((0, m(a) \lor b)\) is in \(\tot\) since it is equal to \(\mff = (0,1)\). Consequently, by axiom 6 of nd-frames, \((a, b) = (0 \lor a, b)\) is in \(\tot\).
    Conversely, assume \(\tot(a,b)\). Since \(p\circ m = \id\), we have \(\tot(pm(a),b)\) and so, again by axiom 6 of nd-frames, \(\tot(0, m(a)\vee b)\). However, since \(\mff\) is in \(\con\), we obtain \(1 \leq m(a) \vee b\) by axiom 4 of d-frames, i.e.\ \(\tot_m(a,b)\).

    Finally, for \(\con \sue \con_m\), let \(\con(a,b)\). By axiom 5 of nd-frames, \(\con(1, m(a)\land b)\) which implies \(m(a) \land b \leq 0\) by axiom 4 of d-frames since \(\mtt\) is in \(\tot\).
    Conversely, given \(\con_m(a,b)\) i.e.\ \(m(a)\land b = 0\), we have that \(\con(1,m(a) \land b)\) which by axiom 5 of nd-frames implies \(\con(1 \land pm(a), b)\). However, \(p\circ m = \id\) entails that also \(\con(a,b)\).
\end{proof}

This gives us a fairly flexible procedure for obtaining models of (a version of) Belnap--Dunn logic.
Any continuous map between topological spaces \(f\colon X\to Y\) yields a frame homomorphism \(h\colon L \to M\) where \(L\) and \(M\) are the frame of opens of \(Y\) and \(X\) respectively. The map \(h\) sends the open \(U \sue Y\) to the preimage \(f^{-1}[U]\sue X\), which is open too.
In order to obtain an injective frame homomorphism from \(h\) it is enough factor it into a surjective homomorphism \(L \twoheadrightarrow L'\) followed followed by an injective homomorphism \(m\colon L' \hookrightarrow M\)\footnote{In fact, this is precisely the frame homomorphism obtained from the restriction of \(f\) to its image \mbox{\(X \twoheadrightarrow f[X]\)}.}
This then, by Proposition~\ref{p:inj-frm-homo}, yields an nd-frame \((L', M, \con_m, \tot_m, p_m, m)\).

From the point of view of the computational interpretation of \cite{escardo2004synthetic}, \(M\) in this nd-frame is the lattice of decidable properties of \(Y\) (i.e.\ the opens of \(Y\)), \(L\) consists of the decidable properties of \(X\) and \(p\) and \(m\) are computed in terms of the continuous map~\(f\), i.e.\ by a computable function which translates from the domain of \(X\) to that of~\(Y\).

\subsection{Partial frames and comparison to linear logic}
\label{s:partial-frm}

Jung and Moshier~\cite{jung06} observed that d-frames admit another twist-type representation theorem. They observed that, given a d-frame \((L_+, L_-, \con, \tot)\), the domain of consistent elements \((\con, \ileq)\) in the information order, when equipped with an extra relation
\[
    x \prec y \iff \tot(y_+, x_-),
\]
fully represents the original d-frame. Formally, we have the following.

\begin{proposition}[Theorem 7.5 of~\cite{jung06}]
    \label{p:partial-frm}
    The consistent elements of a d-frame, viewed as a substructure in the signature \(\left<\prec,\, \idirvee, \imee, \bot,\, \lmee, \lvee, \mff, \mtt\right>\), is a \emph{`partial frame'}\footnote{We do not give the definition of partial frames as it would drift us away from the main story. However, it is not complicated, see Section~7 in \cite{jung06}.}.

    Conversely, given a `partial frame' \(P\), the structure \(([\bot, \mtt], [\bot, \mff], \con_P, \tot_P)\) where \([a,b] = \{ x \in P \mid a \ileq x \ileq b]\) and
    \[
        \con(a,b) \iff \exists c.\, a \ileq c \ete{and} b \ileq c
        \qquad
        \tot(a,b) \iff b \prec a
    \]
    is a d-frame. Up to isomorphism, this is a one-to-one correspondence.
\end{proposition}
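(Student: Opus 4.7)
The plan is to verify both directions of the correspondence by matching the d-frame axioms to the axioms of partial frames from~\cite{jung06}. The central observation is that every consistent element \((a_+,a_-) \in \con\) decomposes in \(L_+ \bowtie L_-\) as \((a_+,a_-) = (a_+,0) \ivee (0,a_-)\); this twist-type decomposition underpins both constructions.

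For the forward direction I would check that \((\con, \idirvee, \imee, \bot, \lmee, \lvee, \mff, \mtt)\) inherits the algebraic shape required of a partial frame. Closure of \(\con\) under \(\lmee, \lvee\) and the membership \(\mff, \mtt \in \con\) are d-frame axiom~1; \(\bot \in \con\) and closure under binary \(\imee\) follow from axiom~2, since \(\imee\) is computed componentwise and its output is \(\ileq\)-below each argument; closure under directed joins \(\idirvee\) is axiom~3. Distributivity of \(\imee\) over \(\idirvee\) and of \(\lmee, \lvee\) over each other is inherited componentwise from the frames \(L_\pm\). For \(\prec\), \(\ileq\)-monotonicity on the right and compatibility with \(\lmee, \lvee\) follow from \(\tot\) being upward closed and a logical sublattice; the crucial compatibility \(x \prec y \Rightarrow x \lleq y\) follows from applying both clauses of axiom~4 to \(\tot(y_+, x_-)\) together with \(\con(x_+, x_-)\) and \(\con(y_+, y_-)\), yielding \(x_+ \leq y_+\) and \(y_- \leq x_-\) respectively.

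For the converse direction, given a partial frame \(P\) I would set \(L_+ = [\bot, \mtt]\) and \(L_- = [\bot, \mff]\). Each interval inherits a frame structure because \(\mtt\) is the logical top within \([\bot, \mtt]\), forcing the logical and informational operations to coincide there (and dually for \([\bot, \mff]\)); completeness under arbitrary joins is deduced from directed joins together with finite joins in the partial frame. I would then verify the four d-frame axioms for \(\con_P\) and \(\tot_P\). Axioms~1--3 are routine: the property of admitting a common upper bound is stable under \(\ileq\)-smaller pairs (axiom~2), under directed joins of inputs (axiom~3), and under logical meets and joins (axiom~1). Axiom~4, which couples \(\con\) and \(\tot\), is the only delicate case and uses the dedicated interaction axioms of partial frames to deduce \(b \leq c\) from \(\con_P(a,b)\) and \(\tot_P(a,c)\).

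The bijection is the cleanest part. Starting from a d-frame, \([\bot,\mtt]\cap\con\) consists exactly of pairs \((a_+, 0)\) with \(a_+ \in L_+\), hence is order-isomorphic to \(L_+\); similarly \([\bot,\mff]\cap\con \cong L_-\). Under these identifications, \(\con_P((a_+,0),(0,b_-))\) asks for a common upper bound of these two elements in \(\con\), which by the componentwise formula is exactly the condition \(\con(a_+,b_-)\); and \(\tot_P((a_+,0),(0,b_-))\) unfolds literally to \(\tot(a_+,b_-)\) by the definition of \(\prec\). Conversely, every \(x \in P\) is recovered as \(x = (x \imee \mtt) \ivee (x \imee \mff)\), a partial-frame decomposition that is matched to the pair \((x \imee \mtt, x \imee \mff) \in \con_P\) in the reconstructed d-frame. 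The main obstacle I anticipate is axiom~4 on both sides, since this is the sole axiom coupling \(\con\) and \(\tot\) and is precisely what ensures that \(\prec\) is rich enough for the substructure \((\con, \prec)\) to remember all of \(\tot\).
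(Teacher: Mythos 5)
A preliminary caveat: the paper does not prove this proposition at all — it is imported verbatim as Theorem~7.5 of~\cite{jung06}, and the definition of `partial frame' is deliberately omitted (see the footnote). So there is no in-paper proof to compare against; your outline can only be measured against the standard verification in the cited source, whose strategy it does follow. The concrete steps you spell out are correct: \(\bot\in\con\) and closure under \(\imee\) from downward closure of \(\con\), closure under \(\lmee,\lvee,\idirvee\) from axioms~1 and~3, and the derivation of \(x\prec y\Rightarrow x\lleq y\) by applying both halves of axiom~4 to \(\tot(y_+,x_-)\) together with \(\con(x_+,x_-)\) and \(\con(y_+,y_-)\) is exactly right. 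The identifications \([\bot,\mtt]\cong L_+\) and \([\bot,\mff]\cong L_-\) and the unfolding of \(\tot_P\) to \(\tot\) are also fine; for \(\con_P\) you should say explicitly that one passes from an arbitrary common upper bound \(c\) of \((a_+,0)\) and \((0,b_-)\) to \((a_+,b_-)\ileq c\) and then uses downward closure of \(\con\).

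The genuine soft spot is the converse direction and the round trip starting from \(P\), which you leave schematic precisely where the content of the partial-frame axioms lies. Your recovery \(x=(x\imee\mtt)\ivee(x\imee\mff)\) invokes a binary information join that is not in the displayed signature \(\left<\prec,\idirvee,\imee,\bot,\lmee,\lvee,\mff,\mtt\right>\); its existence for bounded pairs, the claim that \([\bot,\mtt]\) and \([\bot,\mff]\) are frames (coincidence of logical and informational operations on the intervals, completeness, and the frame distributivity law), the surjectivity and structure-preservation of \(x\mapsto(x\imee\mtt,x\imee\mff)\) onto \(\con_P\), and d-frame axiom~4 for \((\con_P,\tot_P)\) are all consequences of specific partial-frame axioms that you never state, so as written these are assertions rather than proofs. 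Since the paper withholds the definition, this cannot be repaired self-containedly here; the honest course is to quote the relevant axioms from Section~7 of~\cite{jung06} (or cite them individually) at exactly these points, rather than appealing generically to ``dedicated interaction axioms''.
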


If we start from an nd-frame \((L_+, L_-, \con, \tot, p,m)\) instead of just a d-frame, we can compare the relation \(\prec\) and the weak~\(\oimply\) and strong~\(\simply\) implications, as defined in \eqref{eq:neg-opimply-ndfrm} and \eqref{eq:simply}, respectively.

\begin{lemma}[Theorem 4.4~\cite{jakljungpultr2016}]
    For \(x,y\in \con\), if \(x \prec y\) then both \(x \oimply y\) and \(x \simply y\) are valid.
\end{lemma}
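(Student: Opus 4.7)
The plan is to unfold the definitions and reduce everything to checking that the two components of the pairs representing $x\oimply y$ and $\neg y \oimply \neg x$ are $1$ and $0$ respectively, since validity means the pair equals $\mtt = (1,0)$ in the $\ileq$ order.

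First I would extract the natural logical-order inequality hidden in the hypothesis. From $x \prec y$ we get $\tot(y_+, x_-)$. Combined with $\con(x_+, x_-)$ from $x \in \con$, the second form of axiom 4 of d-frames (namely $\con(a,b)$ and $\tot(c,b)$ imply $a \leq c$) yields $x_+ \leq y_+$. Combined with $\con(y_+, y_-)$ from $y \in \con$, the first form of axiom 4 ($\con(a,b)$ and $\tot(a,c)$ imply $b \leq c$) yields $y_- \leq x_-$. Together these are precisely $x \lleq y$.

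Next I would verify $x \oimply y$ is valid, that is, $(x_+ \to y_+,\, m(x_+) \land y_-) = (1,0)$. The first coordinate is $1$ because $x_+ \leq y_+$ in the Heyting algebra $L_+$. For the second coordinate, note that $y_- \leq x_-$, so it suffices to show $m(x_+) \land x_- = 0$; the first nd-frame $\con$ rule applied with $a=1, b=x_+, c=x_-$ turns $\con(x_+, x_-)$ into $\con(1, m(x_+) \land x_-)$, and axiom 4 together with $\mtt = (1,0) \in \tot$ forces $m(x_+) \land x_- \leq 0$.

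Similarly I would verify $\neg y \oimply \neg x$ is valid. Unfolding, $\neg y \oimply \neg x = (p(y_-) \to p(x_-),\, m(p(y_-)) \land m(x_+))$. The first coordinate is $1$ because $y_- \leq x_-$ and $p$ is monotone (it preserves meets by axiom 1 of nd-frames). For the second coordinate, axiom 4 of nd-frames gives $m(p(y_-)) \leq y_-$, and we already established $m(x_+) \land y_- \leq m(x_+) \land x_- = 0$.

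Finally, $x \simply y = (x \oimply y) \lmee (\neg y \oimply \neg x) = \mtt \lmee \mtt = \mtt$ by a one-line twist-product calculation, so $x \simply y$ is valid too. There is no real obstacle here; the only subtlety is remembering to invoke the two dual forms of d-frame axiom 4 in the correct orientation and spotting that the $\con$ rule of nd-frames together with the always-present total pair $\mtt$ is what turns consistency hypotheses into equalities to $0$.
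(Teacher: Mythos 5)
Your proof is correct and proceeds by the same direct verification used in the cited source (the paper itself gives no proof, deferring to Theorem~4.4 of \cite{jakljungpultr2016}): extract \(x_+\leq y_+\) and \(y_-\leq x_-\) from axiom~4 of d-frames applied to \(\con(x_+,x_-)\), \(\con(y_+,y_-)\) and \(\tot(y_+,x_-)\), then compute both weak implications componentwise, using the nd-frame \(\con\) rule together with \(\mtt\in\tot\) to kill the second components. The only remark worth making is that validity merely requires \(\mtt\ileq v(\varphi)\), i.e.\ that the first component be \(1\), so your verification that the second components vanish (and hence that the values equal \(\mtt\) exactly) proves slightly more than is needed, which is harmless.
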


Furthermore, negation \(\neg\colon L_+ \bowtie L_- \to L_+ \bowtie L_-\), defined from \(p,m\) as in \eqref{eq:neg-opimply-ndfrm}, restricts by axiom 5 of nd-frames to \(\neg\colon \con \to \con\), i.e.\ \(\neg\) is an operation on the partial frame of consistent elements.
Axiom 6 of nd-frames immediately yields how~\(\prec\) interacts with negation:
\[
    \frac{x\prec\neg y\vee z}{x\wedge y\prec z}
    \qtq{and}
    \frac{x\wedge\neg y\prec z}{x\prec y\vee z}
\]
Furthermore, we have the following cut rule~\cite[Section 4.1]{jakljungpultr2016}:
\[
    \frac{x\prec \neg\neg y\vee z \qquad z\wedge \neg\neg x'\prec  y'}{ x\wedge x' \prec y\vee  y'}
\]

Observe now that, for a complete Heyting algebra \(H\), i.e.\ a frame, the Chu construction \(\Chu(H,0)\) from Section~\ref{s:linear-logic} is based on the same elements as the partial frame corresponding to the twist-product d-frame \(H\twist\).
Indeed, the partial frame is based on pairs \((a,b)\in H\times H\) such that \(\con\twist(a,b)\) which, in turn, is equivalent to \(a\land b = 0\).

Consequently, Proposition~\ref{p:partial-frm} tells us that linear logic connectives can be represented in terms of the partial frame structure\footnote{Note that in order to interpret \((-)^\bot\), we need the twist-product d-frame \(H\twist\) to be an nd-frame, where \(p\) and \(m\) are just the identity functions. Then, \((-)^\bot\) is just \(\neg\) restricted to the partial frame corresponding to~\(H\twist\).}.
It is perhaps a bit surprising that the information order part of the structure of a partial frame, i.e.\ the connectives \(\idirvee, \imee, \bot\), do not seem to show up in the study of Chu constructions.
Also, the extra relation \(\prec\) of partial frames represents a form of strong implication which, however, seems absent not only in the linear logic literature but also in the bilattice literature.

On the other hand, linear logic of Shulman~\cite{Shulman2022affine}, makes use of the following order relation in order to define subobject classifiers. See the early draft~\cite[Theorem 4.7]{Shulman2018linear} where it appears more explicitly.
\[
    x \lhd y \iff x_+ \leq y_+ \ete{and} y_- \land (y_+ \to 0) \leq x_-
\]
Exploring the interplay and relationships of these notions, whether in the context of bilattices, d-frames or Chu constructions might be an interesting source of new results.

\section{Blame Calculus}
\label{s:blame-calc}

We now leave the world of algebraic structures underpinning various logics and take a look at another example in theoretical computer science which shares some similarities with Belnap's work.

To ensure (partial) correctness of computer programs many programming languages use the mechanism of `typing' which allows the programmer to give a specification of output and input values of their programs. This specification is then checked by the compiler and programs that do not `typecheck' are rejected and not compiled at all. On the other hand, compiled well-typed programs are guaranteed to preserve their typing specifications. For example, a program of type \(\mathbb N \times \mathbb Z \to \{\mtt,\mff\}\), for a given input pair \((n,i)\) where~\(n\in \mathbb N\) and~\(i\in \mathbb Z\), is guaranteed to output one of the boolean values (if the computation terminates).

In some situations it might be fairly useful to assume that either the input or output satisfies some further properties, rather than just merely be of a given type. This can be achieved by specifying new types by comprehension on previously defined types. For example, \(T' = \{ x : T \mid P x \}\) specifies a new type \(T'\), which is a subtype (subset) of~\(T\), consisting of only the elements \(x\) of \(T\) for which the program~\(P\) on input~\(x\) outputs value `True'.
However, adding comprehension to a programming language can make typing undecidable. For example, for a computer program on integers \(\mathbb Z \to \mathbb Z\), it is not decidable if it restricts to the type \(\mathbb Z \to T\) where \(T = \{ x : \mathbb Z \mid x \geq 0 \}\).

The typing mechanism of programming languages which happens during compile time (i.e.\ much before it is even executed) is called `static typing.' To resolve the issue with comprehension, some kind of `dynamic typing' (i.e.\ typechecking which happens during the execution of the program) is needed.

The \emph{Blame Calculus}~\cite{wadler2009blame} of Wadler and Findler is a mixed `static' and `dynamic' typing system whose main property is that the statically typed part of a program is guaranteed to be always correct.

The underlying mechanism of the Blame Calculus gives the programmer the ability to \emph{cast} between compatible types. Each such cast is marked with a label and if, during the execution, the cast fails then a blame of the given label is reported.
For example, if a program \(A\) is of type \(S\), the program
\[
    \left< T \Leftarrow S \right>^p A
\]
is of type \(T\) and \(p\) is the blame label for this cast. Then, during the run of this program,~\(A\) calculates some value \(x\) of type \(S\) and then, as part of executing \(\left< T \Leftarrow S \right>^p\), it is checked that \(x\) is also of type \(T\). If this is the case then \(x\) is simply the returned value. On the other hand, if \(x\) fails to be of type~\(T\) then \emph{blame} is allocated to~\(p\), i.e.\ the program ends with the error code \(p\).

For the purposes of casting between function types, for every blame label \(p\) we also introduce its negative version \(\ol p\). However, double negations cancel out, i.e.~\(\ol{\ol p} = p\). This allows to distinguish blames of the input and output casts. For example, for a program~\(A\) of type \(S \to S'\) and a value \(v\) of type \(T\), the cast of \(A\) to type \(T \to T'\) and its application to the value~\(v\), written as
\[
    \big(\left< (T\to T') \Leftarrow (S \to S') \right>^p A\,\big) v
\]
is the program that first casts \(v\) to type \(S\), applies \(A\) to this and then casts the result back to \(T'\). Importantly, the first blame label is \(\ol p\) whereas the second blame label is~\(p\). In other words, the above is computed as:
\[
    \left< T' \Leftarrow S' \right>^p \big(A \, (\left< S \Leftarrow T \right>^{\ol p} v) \big)
\]
We see that if the first cast \(\left< S \Leftarrow T \right>^{\ol p} v\) fails, negative blame \(\ol p\) is reported, which indicates on the issue of the outer environment of this program fragment. On the other hand, if the final cast \(\left< T' \Leftarrow S' \right>^p\) fails, it means that that the returned value of \(A\) does not follow the expected specification. This tension between `external' and `internal' failures (i.e.\ blaming \(\ol p\) and \(p\), respectively) is central to Blame Calculus.

The key result of \cite{wadler2009blame} is that casting from a `more precise' type to a `less precise' type can only give rise to negative blame and, conversely, casting from a `less precise' to a `more precise' type can only give positive blame.
To make this formal the Blame Calculus relies on various notions of `subtyping'. We do not want to go into the intricate details of the definitions of these notions here, the interested reader is advised to study \cite{wadler2009blame} in detail.
For us, it is enough to know that there are two main subtyping relations: \emph{standard subtyping} \(<:\) and \emph{naive subtyping} \(<:_n\). These relations are inductively generated from a few simple rules where the following two demonstrate the main difference between \(<:\) and \(<:_n\).
\[
    \frac{S' <: S \qquad T <: T'}{S \to T <: S' \to T'}
    \qquad
    \qquad
    \frac{S <:_n S' \qquad T <:_n T'}{S \to T <:_n S' \to T'}
\]
The standard subtyping relation is set-up so that \(S <: T\) ensures that any cast from~\(S\) to~\(T\) never causes any blame. On the other hand, \(S <:_n T\) expresses that the type~\(S\) is `more precise' than \(T\).

The fact that these relations are, in fact, orthogonal to each other is demonstrated in terms of two
auxiliary subtyping relations, the \emph{positive} and \emph{negative} subtyping relations, denoted by
\[
    S <:^+ T
    \qtq{and}
    S <:^- T,
\]
respectively.

The positive subtyping relation \(S <:^+ T\) is defined in such a way so that a cast from \(S\) to \(T\) can only receive negative blame (i.e.\ positive blame cannot occur) and, on the other hand, \(S <:^- T\) ensures that a cast from \(S\) to \(T\) cannot receive negative blame.

One of the main technical results of \cite{wadler2009blame} is that standard and naive subtyping relations can be rewritten in terms of these.

\begin{lemma}[Propositions~7 and 8 in \cite{wadler2009blame}]
    \label{l:naive-vs-pm}
    Given types \(S\) and \(T\),
    \begin{enumerate}
        \item \(S <: T\) if and only if \(S <:^+ T\) and \(S <:^- T\).
        \item \(S <:_n T\) if and only if \(S <:^+ T\) and \(T <:^- S\).
    \end{enumerate}
\end{lemma}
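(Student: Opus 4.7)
The plan is to proceed by simultaneous structural induction on the types $S$ and $T$, taking advantage of the fact that all four relations $<:$, $<:_n$, $<:^+$, $<:^-$ are inductively defined by rule schemas of essentially the same shape. The base cases (ground types and the dynamic type) should reduce to a direct comparison of the defining rules, where the positive/negative split collapses to the obvious trivial conditions; I expect no real content there, so I would dispatch them by just reading off the definitions.

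All of the genuine work lives in the function-type case, which hinges on the expected clauses
\begin{align*}
    S \to T <:^+ S' \to T' &\iff S' <:^- S \ete{and} T <:^+ T', \\
    S \to T <:^- S' \to T' &\iff S' <:^+ S \ete{and} T <:^- T'.
\end{align*}
Given these, for part (1) I would unfold $S \to T <: S' \to T'$ to the contravariant premise $S' <: S$ together with $T <: T'$, apply the inductive hypothesis to each conjunct, and rearrange the resulting four-fold conjunction into precisely the unfolding of $S \to T <:^+ S' \to T'$ together with $S \to T <:^- S' \to T'$. For part (2), the unfolding of $S \to T <:_n S' \to T'$ is covariant in the domain, so the inductive hypothesis yields the four-fold conjunction $S <:^+ S' \wedge S' <:^- S \wedge T <:^+ T' \wedge T' <:^- T$, which now recombines into the unfolding of $S \to T <:^+ S' \to T'$ together with that of $S' \to T' <:^- S \to T$ --- the swap of arguments on the negative side is exactly what the domain-variance flip of $<:_n$ versus $<:$ forces.

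The main obstacle I anticipate is bookkeeping: because the function-type rules for $<:^+$ and $<:^-$ each invoke the \emph{other} polarity on the domain, a naive induction can easily miscount where $<:^+$ and $<:^-$ should appear. I would therefore state both (1) and (2) as a single joint inductive hypothesis so that whichever polarity surfaces on a domain premise can be rewritten using the appropriate half of the hypothesis. Conceptually, (1) expresses a component-wise conjunction while (2) expresses the same conjunction modulo a twist-style order-flip on one factor, mirroring the decomposition of a subtyping judgment into affirmative and refutational parts that is the paper's unifying theme.
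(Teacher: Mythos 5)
First, note that the paper does not prove this lemma at all: it is imported verbatim from Wadler and Findler (Propositions~7 and~8 of the cited work), and the surrounding text explicitly declines to unfold the definitions of the four subtyping relations. So there is no in-paper argument to compare yours against; what can be assessed is whether your outline would reconstruct the original proof, which does indeed go by induction on the structure of types with exactly the variance bookkeeping you describe at function types.

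Judged on that basis, there is a genuine gap, and it sits precisely where you declare there to be ``no real content.'' In the Blame Calculus the four relations \(<:\), \(<:_n\), \(<:^+\), \(<:^-\) agree in shape at function types (modulo the variance flips you handle correctly), so that case is the routine part; they genuinely differ only at the dynamic type \(\mathsf{Dyn}\) and at refinement types \(\{x{:}B \mid e\}\), and those cases carry all the content of the lemma. For instance, \(S <:_n \mathsf{Dyn}\) holds for every \(S\) (``\(\mathsf{Dyn}\) is least precise'') while \(S <: \mathsf{Dyn}\) does not; making part~(2) come out right at \(T = \mathsf{Dyn}\) forces the specific rules that \(S <:^+ \mathsf{Dyn}\) holds unconditionally and \(\mathsf{Dyn} <:^- T\) holds unconditionally, while part~(1) at \(\mathsf{Dyn}\) needs the rule that \(S <:^- \mathsf{Dyn}\) reduces to comparison with the ground type of \(S\), matching the ground-type side condition in \(S <: \mathsf{Dyn}\). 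Similarly, \(B <:^- \{x{:}B \mid e\}\) is derivable even though \(B <: \{x{:}B \mid e\}\) is not, and this asymmetry is exactly what makes \(\{x{:}B \mid e\} <:_n B\) decompose as in part~(2). Your induction cannot be completed without stating and using these \(\mathsf{Dyn}\) and refinement-type rules, and dismissing them as a trivial reading-off of definitions inverts where the work lies: the lemma is interesting because the positive/negative split is non-trivial precisely at those base cases, not at the arrow case.
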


This lemma confirms the earlier promise that \(S <: T\) guarantees that no blame arises when casting from \(S\) to \(T\) and, dually, the desired property that \(S\) being `more precise' than \(T\) (i.e.\ \(S <:_n T\)) ensures that casting from \(S\) to \(T\) cannot raise negative blame and, conversely, casting from \(T\) to \(S\) cannot raise negative blame.

The relationship to Belnap's two orders, the information and logic orders, and to the twist construction representation is evident here.
The standard subtyping relation~\(<:\) corresponds to Belnap's information order~\(\ileq\) and the naive subtyping relation~\(<:_n\) corresponds to the logical order~\(\lleq\).
We leave it for the future work to explore how this striking connection can be further exploited.

\section{LVars}
\label{s:lvars}

Let us finish with the last imprint of Belnap--Dunn logic in computer science which is of a very practical nature.
It concerns with \emph{threads} which are the parts of a computer program that run in parallel to each other. They are used to speed up computation by running each of the different threads on a different core of the computer CPU.
Threads often need to communicate with each other and this can be done via shared \emph{variables} (in sense of programming languages), i.e.\ locations in computer memory where some piece of data is stored.

However, variables shared among parallel threads pose many challenges.
In particular, it is not clear what the outcome should be if two threads try to write to the same variable at the same time. In case the two threads are attempting to write different values, how do we decide which value gets stored in the variable at the end?

There are many ways to deal with this problem. Typically, we restrict the access to the shared variables, e.g.\ by only enabling writes to some threads, temporarily locking variables (mutexes, atomic writes), or upon a failed write restarting the thread~(STM).
The disadvantages of these approaches are either reduced efficiency or limited expressivity.

Kuper and Newton~\cite{KuperN13icfp,kuper2015thesis} propose a new type of variables, called \emph{LVars}, to tackle some of these limitations. The main idea is to structure our data in a \emph{bounded join-semilattice} \((S,\ivee,\bot,\top)\), which can be implemented in traditional data structures such as trees, arrays, etc.
Then, every write to a shared variable is computed as a join with its current value, which is guarantees to always increase the stored value.
Also, due to the associativity of \(\ivee\), it does not matter in which order we do the writes.

One way LVar types can be constructed is by taking an arbitrary algebra and freely adding the order. For example, starting from the structure \((\mathbb N, +, \times, 0,1)\) we define the semilattice \(S\) on the set \(\mathbb N \cup \{\bot,\top\}\) by
\[
    (\forall n\in \mathbb N) \ \bot \ileq n \ileq \top
    \qtq{and}
    (\forall n,m\in \mathbb N) \ n \ivee m = \top.
\]
The value \(\bot\) represents the initial value of the variable. If the variable's value is~\(x\) and a thread writes to the variable the value \(n + m\), the new value of the variable will be~\(x \ivee (n+m)\). This means that unless \(x = n+m\), the new value will be \(\top\).

In case we freely add order to the two-variable boolean algebra \(\two\), we obtain the lattice \FOUR{} (from Figure~\ref{fig:four}) in the information order.
Admittedly, LVars are not as closed to the four-valued logic of Belnap as the other three topics we discussed. However, the underlying idea is very much in the spirit of Belnap, that is, an extra order (the information order) needs to be added in order to deal with computational phenomena.
In fact, Belnap \cite{belnap76,belnap77} argues that his logic is suitable for scenarios where the computer obtains data from a variety of (potentially contradicting) sources. In this case, the different sources are the different threads of the computer program.

\bmhead{Acknowledgements}
The author would like to thank Wesley Fussner for useful discussions on the topics of twist products and linear logic, and Umberto Rivieccio and Achim Jung for their useful remarks on an earlier version of this manuscript.
The author is also grateful to Philip Wadler for his talk at the Logic and Semantics Seminar in Cambridge, from March 2023, which inspired the author to look at Blame Calculus.

This project has received funding from the European Union's Horizon 2020 research and innovation programme under the Marie SkÂłodowska-Curie grant agreement No~101111373.


\end{document}